\newtheoremstyle{break}
  {\topsep}{\topsep}%
  {\itshape}{}%
  {\bfseries}{}%
  {\newline}{}%
\DeclareSymbolFont{letters}{OML}{txmi}{m}{it}
\newlength{\xleftmargin}
\def\T{\mathcal{T}}
\newcommand{\FV}{\text{FV}}
\newcommand\vvthinspace{\kern+0.041667em}
\newcommand\vthinspace{\kern+0.083333em}
\newcommand\negvthinspace{\kern-0.083333em}
\newcommand\negvvthinspace{\kern-0.041667em}
\DeclareFontFamily{OT1}{pzc}{}
\DeclareFontShape{OT1}{pzc}{m}{it}{<-> s * [1.10] pzcmi7t}{}
\DeclareMathAlphabet{\mathcalx}{OT1}{pzc}{m}{it}
\newcommand\judgei[1]{{\vartriangleright}\;#1}
\newcommand\judge[2]{#1\;\vthinspace{\vartriangleright}\;#2}
\newcommand\axiomstrut{\vrule height .9ex depth 0pt width 0pt} 
\newcommand\metafun[1]{\ensuremath{\mathit{#1}}}
\newcommand\sym[1]{\mathsf{#1}}
\newcommand\substname{\metafun{subst}}
\newcommand\reify{\metafun{reify}}
\newcommand\subst[1]{\substname(#1)}
\newcommand\substapp[2]{#1(#2)}
\renewcommand\FV[1]{\metafun{FV}(#1)}
\newcommand\emptyctx{\varnothing}
\newcommand{\bothpol}{\mathclap{\raise.1ex\hbox{\hspace{2.4mm}\scalebox{.9}{$\top$}}}{\scalebox{.9}{$\bot$}}}
\newcommand\GAMMA{\mathrm{\Gamma}}
\DeclareFontFamily{OT1}{pzc}{}
\DeclareFontShape{OT1}{pzc}{m}{it}{<-> s * [1.10] pzcmi7t}{}
\DeclareMathAlphabet{\mathcalx}{OT1}{pzc}{m}{it}
\newcommand\medrightarrow{\mathrel{{{\color{black}\relbar}\kern-0.9ex\rlap{\color{white}\ensuremath{\blacksquare}}\kern-0.9ex}\joinrel{\color{black}\rightarrow}}}
\newcommand\Medrightarrow{\mathrel{{{\color{black}\Relbar}\kern-0.9ex\rlap{\color{white}\ensuremath{\blacksquare}}\kern-0.9ex}\joinrel{\color{black}\Rightarrow}}}
\newcommand\Medleftrightarrow{\mathrel{\Leftarrow\kern-1.685ex\Rightarrow}}
\newcommand\Refl{\textsc{Refl}}
\newcommand\Taut{\textsc{Taut}}
\newcommand\Tautof[1]{\Taut\smash{\ensuremath{_{#1}}}}
\newcommand\TautT{\Tautof{\T}}
\newcommand\Trans{\textsc{Trans}}
\newcommand\Cong{\textsc{Cong}}
\newcommand\Beta{\textsc{Beta}}
\newcommand\Bind{\textsc{Bind}}
\newcommand\Sko{\textsc{Sko}}
\newcommand\SkoEx{\Sko\smash{\ensuremath{_{\,\exists}}}}
\newcommand\SkoAll{\Sko\smash{\ensuremath{_{\,\forall}}}}
\newcommand\Let{\textsc{Let}}
\newcommand\BOX[1]{%
  {\vthinspace\setlength\fboxsep{.2ex}\fbox{\kern.2ex\ensuremath{#1\vphantom{hj}}\kern.2ex}\vthinspace}%
}
\theoremstyle{break}
\theoremstyle{definition}
\theoremstyle{remark}
\theoremstyle{lemma}
\theoremstyle{example}
\newtheorem{example}{Example}
\newtheorem{lemma}{Lemma}
\newcommand{\alt}{\ensuremath{~\mid~}}
\newcommand{\nter}[2][]{{\textcolor{OliveGreen}{\ensuremath{\langle}\emph{\sffamily{#2}}\ensuremath{\rangle^{#1}}}}\xspace}
\newcommand{\expr}[1]{{\textcolor{NavyBlue}{\texttt{#1}}}}
\newcommand{\ter}[1]{\expr{#1}\xspace}
\newcommand\HOLyHammer{HOL\kern-.075ex\raise.19ex\hbox{\textsc{y}}\kern-.05ex\-Hammer}
\newcommand{\thickbar}{\mathpalette\@thickbar}
\newcommand{\@thickbar}[2]{{#1\mkern1.5mu\vbox{
  \sbox\z@{$#1\mkern-1.5mu#2\mkern-1.5mu$}%
  \sbox\tw@{$#1\overline{#2}$}%
  \dimen@=\dimexpr\ht\tw@-\ht\z@-.8\p@\relax
  \hrule\@height.7\p@ 
  \vskip\dimen@
  \box\z@}\mkern1.5mu}
}
\title{Language and Proofs for Higher-Order SMT\\ (Work in Progress)%
\footnote{This PxTP submission should be considered an addendum of the PxTP
presentation-only submission ``Scalable Fine-Grained Proofs for Formula Processing.''
Although we tried to make the current paper self-contained,
a knowledge of this previous work is useful to get the whole picture.}} 
\author{Haniel Barbosa\qquad Jasmin Christian Blanchette\qquad Simon Cruanes\\\\ Daniel El Ouraoui\qquad Pascal Fontaine\thanks{This work has
    been partially supported by the ANR/DFG project STU 483/2-1 SMArT
    ANR-13-IS02-0001 of the Agence Nationale de la Recherche, by the
    H2020-FETOPEN-2016-2017-CSA project SC$^\mathsf{2}$ (712689), and by the
    European Research Council (ERC) starting grant Matryoshka (713999).}
\\
  \institute{
    University of Lorraine, CNRS, Inria, and LORIA, Nancy, France \\
    Vrije Universiteit Amsterdam, Amsterdam, The Netherlands \\
     Max-Planck-Institut f\"ur Informatik, Saarbr\"ucken, Germany
  }
    \email{\{haniel.barbosa,jasmin.blanchette,simon.cruanes,daniel.el-ouraoui,pascal.fontaine\}@inria.fr}\\
}
\begin{document}
\maketitle

\begin{abstract}
  Satisfiability modulo theories (SMT) solvers have throughout the years been
  able to cope with increasingly expressive formulas, from ground logics to full
  first-order logic modulo theories.
  Nevertheless, higher-order logic within SMT is still little explored.  One main goal
  of the Matryoshka project, which started in March 2017, is to extend
  the reasoning capabilities of SMT solvers and other automatic provers beyond first-order
  logic.
  In this 
  preliminary report, we report on an extension of the SMT-LIB
  language, the standard input format of SMT solvers, to handle higher-order
  constructs.
  %
  We also discuss how to augment the proof format of the
  SMT solver veriT to accommodate these new constructs and the solving
  techniques they require.
\end{abstract}

\newcommand\cdclt{CDCL($\mathcal{T}$)}

\section{Introduction}

Higher-order (HO) logic is a pervasive setting for reasoning about numerous
real-world applications. In particular, it is widely used in proof assistants
(also known as interactive theorem provers) to provide trustworthy,
machine-checkable formal proofs of theorems.
A major challenge in these applications is to automate as much as
possible the production of these formal proofs, thereby reducing the burden
of proof on the users.

An effective approach for stronger automation is to rely on less
expressive but more automatic theorem provers to discharge some of the proof
obligations.
Systems such as \HOLyHammer, Miz$\mathbb{AR}$, Sledgehammer, and Why3,
which provide a
one-click connection from proof assistants to first-order
provers, have led in recent years to
considerable improvements in proof assistant automation
\cite{blanchette-et-al-2016-qed}.
Today, the leading automatic provers for first-order classical logic 
are based either on the superposition calculus~\cite{Bachmair1994,Nieuwenhuis2001-har} or
on \cdclt{}~\cite{Nieuwenhuis2006}. Those based on the latter are usually
called satisfiability modulo theory (SMT) solvers and are the focus of this
paper.

Our goal, as part of the Matryoshka
project,\footnote{\url{http://matryoshka.gforge.inria.fr/}} is to extend SMT
solvers to natively handle higher-order problems, thus avoiding the completeness
and performance issues associated with clumsy encodings.
%
In this paper, we present our first steps towards two contributions within our
established goal:\ to extend the input (problems) and output
(proofs) of SMT solvers to support higher-order constructs. Most SMT solvers
support SMT-LIB~\cite{Barrett2015} as an input format.
We report on a syntax extension for augmenting SMT-LIB with higher-order
functions with partial applications, $\lambda$-abstractions, and quantification
on higher-order variables (Section~\ref{sec:syntax-ext}).
Regrettably, there is no standard yet for proof output;
each proof-producing solver has its own format.
We focus on the proof format of the SMT solver
veriT~\cite{bouton2009verit}. This solver is known for its very detailed
proofs~\cite{Besson2011,Barbosa2017-proofs}, which are reconstructed in the
proof assistants
Isabelle\slash HOL
\cite{Blanchette2016-isar} and the GAPT system \cite{Ebner2016}.

Proofs in veriT accommodate the formula processing and the proof search
performed by the solver. Processing steps are represented using an extensible
set of inference rules described by Barbosa et al.~\cite{Barbosa2017-proofs}.
Here, we extend this calculus to support transformations such as $\beta$-reduction
and congruence with $\lambda$-abstractions, which are required by the new
constructs that can appear in higher-order problems
(Section~\ref{sec:proofs-ext}).

The \cdclt{} reasoning performed by veriT is represented by a resolution
proof, which consists of the resolution steps performed by the underlying SAT
solver and the lemmas added by the theory solvers and the instantiation
module. These steps are described in Besson et al.~\cite{Besson2011}.
The part of the proof corresponding to the actual proving will change according
to how we solve higher-order problems. In keeping with the \cdclt{} setting, the
reasoning is performed in a stratified manner. Currently, the SAT solver handles
the propositional reasoning, a combination of theory solvers tackle the ground
(variable-free) reasoning, and an instantiation module takes care of the first-order
reasoning. Our initial plan is to 
adapt the instantiation module so
that it can heuristically instantiate quantifiers with functional variables and
to extend veriT's underlying modular engine for
computing substitutions \cite{Barbosa2017}.
Since only modifications to the instantiation module are planned, the only rules
that must be adapted are those concerned with quantifier instantiation:

\begin{center}
    \AXC{$\phantom{\cdot}$}
    \RL{{\relax \textsc{Inst}$_\forall$}}
    \UIC{\strut$\forall x.\>\varphi[x]\rightarrow \varphi[t]$}
    \DP
    \qquad
    \AXC{$\phantom{\cdot}$}
    \RL{{\relax \textsc{Inst}$_\exists$}}
    \UIC{\strut$\varphi[t]\rightarrow\exists x.\>\varphi[x]$}
    \DP
\end{center}

\noindent
These rules are generic enough to be suitable also for higher-order
instantiation.  Here, we focus on adapting the rules necessary to suit the
new higher-order constructs in the formula processing steps.

\section{A Syntax Extension for the SMT-LIB Language}
\label{sec:syntax-ext}

By the time of starting this writing, the SMT-LIB standard was at version
2.5~\cite{Barrett2015}, and version 2.6 was in preparation.  Although some
discussions to extend the SMT-LIB language to higher-order logic have occurred in the
past, notably to include $\lambda$-abstractions, the format is currently based
on many-sorted first-order logic.  We here report on an extension of the
language in a pragmatic way to accommodate higher-order
constructs:\ higher-order functions with partial applications,
$\lambda$-abstractions, and quantifiers ranging over higher-order variables.
This extension is inspired by the work on TIP (Tools for Inductive
Provers)~\cite{rosen2015tip}, which is another pragmatic extension of SMT-LIB.

SMT-LIB contains commands to define atomic sorts and functions, but no
functional sorts.  The language is first extended 
so that functional
sorts can be built:

\begin{center}
\begin{tabular}{lcl}
 \nter{sort} & ::= & \phantom{\alt\ }\nter{identifier}
 \alt\ \ter{(} \nter{identifier} \nter[+]{sort} \ter{)}\\[2pt]
 &  & \alt\ \ter{(} \ter{->} \nter[+]{sort} \nter{sort} \ter{)}
\end{tabular}
\end{center}

\noindent
The second line is the addition to the original grammar. We use $\ter{(}\,
\ter{->}\, \nter[+]{sort}\, \nter{sort}\, \ter{)}$ rather than a special case of
$\ter{(}\,\nter[+]{identifier}\, \nter{sort}\, \ter{)}$ to avoid
ambiguities with parametric sorts and to have the same notation as the one
generally used for functional sorts.

The next modification is in the grammar for terms, which essentially adds a rule for
$\lambda$-abstractions and generalizes the
application so that any term can be applied to other terms:

\begin{center}
\begin{tabular}{lcl}
 \nter{term} & ::= & \nter{spec\_constant} \\
             &\alt&  \nter{qual\_identifier} \\
             &\alt & \ter{(} \nter{term} \nter[+]{term} \ter{)} \\
             &\alt & \ter{(} \ter{lambda} \ter{(} \nter[+]{sorted\_var} \ter{)}
                             \nter[+]{term} \ter{)} \\
             &\alt & \ter{(} \ter{let}
                             \ter{(} \nter[+]{var\_binding} \ter{)} \nter{term}
                     \ter{)} \\
             &\alt & \ter{(} \ter{forall}
                             \ter{(} \nter[+]{sorted\_var} \ter{)} \nter{term}
                     \ter{)} \\
             &\alt & \ter{(} \ter{exists}
                             \ter{(} \nter[+]{sorted\_var} \ter{)} \nter{term}
                     \ter{)} \\
             &\alt & \ter{(} \ter{match} \nter{term}
                             \ter{(} \nter[+]{match\_case} \ter{)}
                     \ter{)} \\
             &\alt & \ter{(} \ter{!} \nter{term} \nter[+]{attribute} \ter{)}\\[1ex]
 \nter{sorted\_var} & ::= & \ter{(} \nter{symbol} \nter{sort} \ter{)}
\end{tabular}
\end{center}

\noindent
The old rule $\ter{(} \nter{qual\_identifier}\ \nter[+]{term} \ter{)}$
is now redundant. Higher-order quantification requires no new syntax, since
sorts have been extended to accommodate functions.



Semantically, the well-sortedness rules in SMT-LIB are extended with
the following typing rules for the arrow constructor \verb+->+ and $\lambda$-abstraction:

\begin{center}
\rootAtTop
\AXC{\strut$\Sigma \vdash \lambda x.\; t : \sigma \rightarrow \tau$ }
\RL{\textsc{lambda}}
\UIC{\strut $\Sigma[x : \sigma]  \vdash t : \tau$}
\DP
\qquad
\rootAtTop
\AXC{\strut $\Sigma \vdash u\; v : \tau$ }
\RL{\textsc{app}}
\UIC{\strut $\Sigma \vdash u : \sigma \rightarrow \tau \; \; \; \Sigma \vdash v : \sigma$}
\DP
\end{center}
Where a judgment is composed of two items. On the left hand side, a signature
$\Sigma$, which is a tuple of function and constant symbols.
On the right hand side, a term annotated by its type.
The notation $\Sigma[x:\tau]$ stands for the signature that maps $x$ to the
type $\tau$.


If we want to define a function taking an integer as argument and
returning a function from integers to integers, it is now possible to write
\,\verb+(declare-fun f (Int) (-> Int Int))+.\,
The following example illustrates
higher-order functions, terms representing a function, and partial applications:

\begin{center}
  \begin{lstlisting}[frame=none,basicstyle=\ttfamily, mathescape, xleftmargin=\xleftmargin]
(set-logic UFLIA)
(declare-fun g (Int) (-> Int Int))
(declare-fun h (Int Int) Int)
(declare-fun f ((-> Int Int)) Int)
(assert (= (f (h 1)) ((g 1) 2)))
(exit)
  \end{lstlisting}
\end{center}

\noindent
The term \verb+(g 1)+ is a function from \verb+Int+ to \verb+Int+, in agreement
with the sort of \verb+g+.  Then it is applied to \verb+2+ in the expression
\verb+((g 1) 2)+ of sort \verb+Int+.  The term \verb+(h 1)+ is a partial
application of the binary function \verb+h+, and is thus a unary function.  The
term \verb+(f (h 1))+ is therefore well typed and is an \verb+Int+.
Note that in our presentation all functions of type \verb+(-> Int Int +\dots\verb+ Int)+
are equivalent to \verb+(-> Int (-> Int (-> +\dots\verb+ Int)))+. This implies, in
particular, that in the example above \verb+((g 1) 2)+ is semantically
equal to \verb+(g 1 2)+. 
More precisely we may considerate the three different declaration of f below:

\begin{center}
  \begin{lstlisting}[frame=none,basicstyle=\ttfamily, mathescape, xleftmargin=\xleftmargin]
(declare-fun f () (-> Int (-> Int Int)))
(declare-fun f (Int) (-> Int Int))
(declare-fun f (Int Int) Int)
  \end{lstlisting}
\end{center}
as the unique form \verb+(declare-fun f () (-> Int Int Int))+. This follows from
\verb+->+ being right associative. The next example features $\lambda$-abstraction:

\begin{center}
  \begin{lstlisting}[frame=none,basicstyle=\ttfamily, mathescape, xleftmargin=\xleftmargin]
(set-logic UFLIA)
(declare-fun g (Int) (Int))
(assert
  (= ((lambda ((f (-> Int Int)) (x Int)) f x) g 1) (g 1)))
(exit)
  \end{lstlisting}
\end{center}

\noindent
The term \verb+(lambda ((f (-> Int Int)) (x Int)) f x)+ is an anonymous function
that takes a function \verb+f+ and an integer \verb+x+ as arguments.
It is applied to \verb#g# and \verb+1+, and the fully applied
term is stated to be equal to \verb+(g 1)+. The assertion is a tautology
(thanks to $\beta$-reduction).

\section{An Extension for the veriT Proof Format}
\label{sec:proofs-ext}

Our setting is classical higher-order logic as defined by the extended SMT-LIB
language above, or abstractly described by the following grammar:
%
\begin{equation*}
  M \; ::= \;\; x \;\; |
  \; \;  c  \;\;
  |  \;\; M \; M \;\;
  |  \;\; \lambda x.\> M \;\;
  | \;\; \mathrm{let} \; \bar x_n \simeq \thickbar{M}_n \; \mathrm{in} \; M
\end{equation*}
%
\noindent where formulas are terms of Boolean type.  We rely on the metatheory
defined by Barbosa et al.~\cite{Barbosa2017-proofs}. Besides the axioms for
characterizing Hilbert choice and `let' described there, we add the following
axiom for $\lambda$-abstraction, where $\simeq$ denotes the equality predicate:
\[
  \tag{$\beta$}
  \models  (\lambda x.\; t[x])\> s \; \simeq \; t[s]
\]
In general, the notation $t[\bar x_n]$ stands for a term that may depend on distinct
variables $\bar x_n$; $t[\bar s_n]$ is the corresponding term where the terms
$\bar s_n$ are simultaneously substituted for $\bar x_n$; bound variables in $t$
are renamed to avoid capture.
For readability, and because it is natural with a higher-order calculus, we
present the rules in curried form---that is, functions can be partially applied,
and rules must only consider unary functions.

The notion of context is as in Barbosa et al.:
\[
    \Gamma \; ::= \;\; \varnothing \;\; |
    \; \; \Gamma,\,x \;\;
    |  \;\; \Gamma,\, \bar x_n \mapsto \bar t_n
\]
Each context entry either \emph{fixes} a variable~$x$ or defines a
\emph{substitution} $\{\bar x_n \mapsto \bar t_n\}$.
%
%
Abstractly, a context $\GAMMA$ fixes a set of variables and specifies a
substitution $\subst{\GAMMA}$. The substitution is the identity for $\emptyctx$
and is defined as follows in the other cases:
\begin{align*}
  \subst{\GAMMA,\, x} & = \subst{\GAMMA}[x \mapsto x]
  \, 
&
  \subst{\GAMMA,\, \bar x_n \mapsto \bar t_n} & = \subst{\GAMMA} \mathrel\circ \{\bar x_n \mapsto \bar t_n\}
\end{align*}
In the first equation, the $[x \mapsto x]$ update shadows any replacement of $x$
induced by $\GAMMA$.
We write $\substapp{\GAMMA}{t}$ to abbreviate the capture-avoiding substitution
$\subst{\GAMMA}(t)$.

Our new set of rules is similar to that in Barbosa et al.  The rules \Trans,\,
\SkoEx,\, \SkoAll,\, \Let,\, and \TautT{} are unchanged.  The \Bind{} rule is
modified to accommodate the new $\lambda$-binder:

\begin{center}
    \AXC{\strut$\judge{\GAMMA,\,y,\,x\mapsto y}{s\simeq t}$}
    \RL{{\relax \Bind}\quad\mbox{if\, $y \notin \FV{Bx.\>s}$}}
    \UIC{\strut$\judge{\GAMMA}{(Bx.\>s)\simeq (By.\>t)}$}
    \DP
\end{center}

\noindent
The metavariable $B$ ranges over $\forall$, $\exists$, and $\lambda$.  The
\Cong{} rule is also modified to accommodate new cases.  With respect to the
first-order calculus, the left-hand side of an application can be an arbitrarily
complex term, and not simply a function or predicate symbol.
Rewriting can now occur also on these complex terms.
The updated \Cong{} rule is as follows:

\begin{center}
   \AXC{\strut$\judge{\GAMMA}{s \simeq s'}{}$}
        \AXC{\strut$\judge{\GAMMA}{t \simeq t'}$}
    \RL{{\relax \Cong}}
    \BIC{\strut$\judge{\GAMMA}{s\> t\simeq s'\> t'}$}
    \DP
\end{center}

The only genuinely new rule is for $\beta$-reduction---that is,
the substitution of an argument in the body of a $\lambda$-abstraction.
It is similar in form to the \Let{} rule from
the first-order calculus:

\begin{center}
    \AXC{\strut$\judge{\GAMMA}{v \simeq s}{}$}
      \AXC{\strut$\judge{\GAMMA,\, x \mapsto s}{t\simeq u}$}
    \RL{{\relax \Beta}\quad\mbox{if\,
    $\substapp{\GAMMA}{s} = s $}}
    \BIC{\strut$\judge{\GAMMA}{(\lambda x.\> t)\; v \simeq u}$}
    \DP
\end{center}

\noindent
Indeed, $(\mathrm{let}~ x\simeq u ~\mathrm{in}~ t)$ and $(\lambda x.\> t)\, u$
are semantically equal.

\begin{example}\rm
  The derivation tree of the normalization of $(\lambda x.\; \sym{p} \; x\; x)
  \; \sym{a}$ is as follows:
    \[
    \AXC{\axiomstrut}
    \RL{{\small $\Cong$}}
    \UIC{\strut$\judgei{\sym{a} \simeq \sym{a}}$}
      \AXC{\axiomstrut}
      \RL{{\small \Refl}}
      \UIC{\strut$\judge{x\mapsto\sym{a}}{\sym{p}\simeq \sym{p}}$}
        \AXC{\axiomstrut}
        \RL{{\small \Refl}}
        \UIC{\strut$\judge{x\mapsto\sym{a}}{x\simeq \sym{a}}$}
      \RL{{\small \Cong}}
      \BIC{\strut$\judge{x\mapsto\sym{a}}{\sym{p}\,x\simeq \sym{p}\,\sym{a}}$}
        \AXC{\axiomstrut}
        \RL{{\small \Refl}}
        \UIC{\strut$\judge{x\mapsto\sym{a}}{x\simeq \sym{a}}$}
      \RL{{\small \Cong}}
      \BIC{\strut$\judge{x\mapsto\sym{a}}{\sym{p}\,x\, x\simeq \sym{p}\,\sym{a}\, \sym{a}}$}
    \RL{{\small \Beta}}
    \BIC{\strut$\judgei{(\lambda x.\> \sym{p}\,x\, x)\, \sym{a}\,\simeq \sym{p}\,\sym{a}\,\sym{a}}$}
    \DP
  \]
\end{example}
\begin{example}\rm
The following tree features a $\beta$-redex under a $\lambda$-abstraction.
Let $\, \Gamma_1 = w,\, x \mapsto w$; $\, \Gamma_2 = \Gamma_1,\, y \mapsto \sym{f}\> w$; and
$\, \Gamma_3 = \Gamma_2,\, z \mapsto \sym{f}\> w$:
  \[
    \AXC{\axiomstrut}
    \RL{{\small \Refl}}
    \UIC{\strut$\judge{\GAMMA_1}{\sym{f}\simeq \sym{f}}$}
      \AXC{\axiomstrut}
      \RL{{\small \Refl}}
      \UIC{\strut$\judge{\GAMMA_1}{x\simeq w}$}
    \RL{{\small \Cong}}
    \BIC{\strut$\judge{\GAMMA_1}{\sym{f}\> x\simeq \sym{f}\> w}$}
      \AXC{\axiomstrut}
      \RL{{\small \Refl}}
      \UIC{\strut$\judge{\GAMMA_2}{y\simeq \sym{f}\>w}$}
        \AXC{\axiomstrut}
        \RL{{\small \Refl}}
        \UIC{\strut$\judge{\GAMMA_3}{\sym{p}\>z\simeq \sym{p}\>(\sym{f}\>w)}$}
      \RL{{\small \Beta}}
      \BIC{\strut$\judge{\GAMMA_2}{(\lambda z.\> \sym{p}\> z)\> y\simeq \sym{p} \> (\sym{f}\> w)}$}
    \RL{{\small \Beta}}
    \BIC{\strut$\judge{\GAMMA_1}{(\lambda y.\> (\lambda z.\> \sym{p}\> z) \>
        y)\> (\sym{f}\> x)\simeq \sym{p} \> (\sym{f}\> w)}$}
    \RL{{\small \Bind}}
    \UIC{\strut$\judgei{(\lambda x.\>
  (\lambda y.\> (\lambda z.\> \sym{p}\> z) \> y)\> (\sym{f}\> x))\simeq (\lambda w.\> \sym{p} \> (\sym{f}\> w))}$}
    \DP
\]
\end{example}
\begin{example}\rm
The transitivity rule is useful when the applied term reduces to a $\lambda$-abstraction.  Let
$\,\Gamma_1 =  w,\, y \mapsto w$;
$\,\Gamma_2 = \Gamma_1,\, x \mapsto w$;
$\,\Gamma_3 = \Gamma_1,\, w_1 \mapsto \sym{p} \; w$;
$\,\Gamma_4 = \Gamma_1,\, x \mapsto \lambda w_1.\> w$;
$\,\Gamma_5 = \Gamma_1,\, w_1 ,\,x \mapsto w_1 $; and
$\,\Gamma_6 = \Gamma_4,\, z \mapsto \lambda w_1.\> w$:
\[
   \AXC{\axiomstrut}
   \RL{{\footnotesize \Refl}}
   \UIC{\strut\small$\judge{ \Gamma_1 }{y \simeq w }$}
   \AXC{\axiomstrut}
   \RL{{\footnotesize\Refl}}
   \UIC{\strut\small$\judge{ \Gamma_2 }{\sym{p}\; x \simeq \sym{p}\; w }$}
  \RL{{\footnotesize \Beta}}
  \BIC{\strut\small$\judge{\Gamma_1}{(\lambda x.\> \sym{p}\; x)\; y \simeq \sym{p}\; w}$}
  \AXC{\axiomstrut}
  \noLine
  \UIC{\strut\small $\Pi$}
 \RL{{\footnotesize \Cong}}
 \BIC{\strut\small$\judge{\Gamma_1 }{((\lambda x.\> (\lambda z.\> z)\; x)\; (\lambda x.\> y)) \;((\lambda x.\> \sym{p}\; x)\; y) \simeq (\lambda w_1.\> w)\; (\sym{p} \; w)}$}
  \AXC{\axiomstrut}
  \RL{{\footnotesize \Refl}}
  \UIC{\strut\small$\judge{\Gamma_1}{\sym{p} \; w \simeq \sym{p}\;w}$}
  \AXC{\axiomstrut}
  \RL{{\scriptsize \Cong}}
  \UIC{\strut\small$\judge{\Gamma_3 }{ w \simeq w }$}
 \RL{{\footnotesize \Beta}}
 \BIC{\strut\small$\judge{\Gamma_1}{(\lambda w_1.\> w)\; (\sym{p} \; w) \simeq w}$}
\RL{{\footnotesize \Trans}}
\BIC{\strut\small$\judge{\Gamma_1 }{((\lambda x.\> (\lambda z.\> z)\; x)\; (\lambda x.\> y)) \;((\lambda x.\> \sym{p}\; x)\; y)  \simeq w}$}
\RL{{\footnotesize \Bind}}
\UIC{\strut\small $\judgei{ (\lambda y.\>(\lambda x.\> (\lambda z.\> z)\; x)\;(\lambda x.\> y)) \;((\lambda x.\> \sym{p}\; x)\; y)) \simeq (\lambda w.\> w)}$}
\DP
\]
where $\Pi$ stands for the subtree\strut
\[
 \AXC{\axiomstrut}
 \RL{{\small \Refl}}
 \UIC{\strut$\judge{\Gamma_5}{y \simeq w}$}
 \RL{{\small \Bind}}
 \UIC{\strut$\judge{\Gamma_1}{(\lambda x.\> y) \simeq (\lambda w_1.\> w)}$}
  \AXC{\axiomstrut}
  \RL{{\small \Refl}}
  \UIC{\strut$\judge{\Gamma_4 }{x \simeq (\lambda w_1.\> w)}$}
  \AXC{\axiomstrut}
  \RL{{\small \Refl}}
  \UIC{\strut$\judge{\Gamma_6}{z \simeq (\lambda w_1.\> w)}$}
 \RL{{\small \Beta}}
 \BIC{\strut$\judge{\Gamma_4}{(\lambda z.\> z)\; x  \simeq (\lambda w_1.\> w)}$}
\RL{{\small \Beta}}
\BIC{\strut$\judge{ \Gamma_1 }{(\lambda x.\> (\lambda z.\> z)\; x)\; (\lambda x.\> y) \simeq (\lambda w_1.\> w)}$}
\DP
\]
\end{example}

The soundness of the extended calculus is a simple extension of the
soundness proof in the technical report by Barbosa et al.~\cite{Barbosa2017-proofs-extended}.  We focus on the
extensions.  Recall that the proof uses an encoding of terms and context in
$\lambda$-calculus, based on the following grammar:
\begin{equation*}
  M \; ::= \;\; \BOX{t} \;\; |
  \;\; (\lambda x.\> M) \; \;
  |  \;\; (\lambda \bar x_n.\> M) \; \bar t_n
\end{equation*}
As previously, $\reify(M \, \simeq\,N)$ is defined as $\forall
\bar x_n.\; t \, \simeq \, u$ if $M \, =_{\alpha \beta} \, \lambda x_1
\ldots \lambda x_n .\> \BOX{t}$ and $N \, =_{\alpha \beta} \, \lambda x_1 \ldots
\lambda x_n .\> \BOX{u}$.  The encoded rules are as follows:

\vskip\abovedisplayskip

\noindent\hbox{}\hfill
   \AXC{\strut${ M[s] \simeq N[s']}{}
    \quad
    {M[t] \simeq N[t']}
    $}
    \RL{{\relax \Cong}}
    \UIC{\strut${M[s \> t]  \simeq N[s' \> t']}$}
    \DP
    \qquad
    \AXC{\strut${M[\lambda y.\>(\lambda x.\> s)\; y] \simeq N[\lambda y.\> t]}$}
    \RL{{\relax \Bind}\quad\mbox{if\, $y \notin \FV{Bx.\>s}$}}
    \UIC{\strut${M[Bx.\>s]\simeq N[By.\>t]}$}
    \DP
\hfill\hbox{}

\vskip.7\abovedisplayskip

\noindent\hbox{}\hfill
\AXC{\strut${M[v] \simeq N[s]}{}
    \quad
    {M[(\lambda x .\> t)\; s] \simeq N[u] }$}
    \RL{{\relax \Beta}\quad\mbox{if\,
    $M[v] =_{\alpha \beta} N[s] $}}
    \UIC{\strut${M[(\lambda x.\> t)\; v] \simeq N[u]}$}
    \DP
\hfill\hbox{}

\vskip\belowdisplayskip

\begin{lemma}
If the judgment $M \simeq N$ is derivable using the encoded inference system with the
theories $\mathcal{T}_1 \ldots \mathcal{T}_n$, then $\models_\mathcal{T}$ $\reify(M\, \simeq \, N)$ with
$\mathcal{T}$ = $\mathcal{T}_1 \cup \cdots \cup \mathcal{T}_n$ $\cup$ $\simeq$
$\cup$ $\epsilon_1$ $\cup$ $\epsilon_2$ $\cup$ $\mathrm{let}$ $\cup$ $\beta$.
\end{lemma}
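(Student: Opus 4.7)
The plan is to extend the induction-on-derivation argument from the technical report of Barbosa et al.\ by adding cases for the three rules that differ from the first-order calculus: the generalized \Cong, the binder-agnostic \Bind{} (now covering $\lambda$ in addition to $\forall$ and $\exists$), and the brand-new \Beta. For every other rule (\Refl, \Trans, \SkoEx, \SkoAll, \Let, \TautT), the induction step is literally the one already established, since neither the rule itself nor its encoding has changed; I would begin by stating this reuse explicitly and then dispatch only the three genuinely new cases.

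For \Cong, I would observe that the first-order proof treated $s\> t$ as $f(t_1,\ldots,t_n)$, which only allows rewriting in argument positions. In the curried setting both premises $M[s]\simeq N[s']$ and $M[t]\simeq N[t']$ yield, by the induction hypothesis, the $\reify$'d universally quantified equalities. Applying the congruence scheme for $\simeq$ at the application node then gives $\reify(M[s\> t]\simeq N[s'\> t'])$; the only subtlety is to check that the $\alpha\beta$-normalization performed by $\reify$ on the conclusion matches the one performed on the premises, which is immediate from the grammar of encoded contexts. For \Bind, the encoding trick $M[\lambda y.\>(\lambda x.\> s)\> y]\simeq N[\lambda y.\> t]$ reduces the case to one application of \Beta{} at the meta-level followed by an appeal to congruence under an outer $\lambda$; the three object-level binders $\forall$, $\exists$, $\lambda$ are then handled uniformly because $\reify$ universally quantifies over the fresh variable $y$ in all three cases, and the side condition $y\notin\FV{Bx.\>s}$ guarantees that the renaming is capture-free.

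The \Beta{} case is where I expect the real work. Here the induction hypothesis yields both $\models_\T \reify(M[v]\simeq N[s])$ and $\models_\T \reify(M[(\lambda x.\>t)\> s]\simeq N[u])$, and the side condition $M[v]=_{\alpha\beta}N[s]$ lets me replace $N[s]$ by $M[v]$ up to $\alpha\beta$ inside the second judgment without changing its denotation, since the theory $\T$ contains the $\beta$ axiom. The object-level step I want to mimic is $(\lambda x.\>t)\> v \;\simeq\; (\lambda x.\>t)\> s \;\simeq\; t[s/x]$, where the first equality comes from $v\simeq s$ (premise~1 combined with the side condition, so no renaming hazards arise) and the second from the $\beta$ axiom; the encoding makes this a transitivity step between the two reified equalities, and the result is exactly $\reify(M[(\lambda x.\>t)\> v]\simeq N[u])$. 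The chief obstacle will be bookkeeping the interaction between the reification, which implicitly quantifies over the variables fixed in the context, and the $\alpha\beta$-equivalence witnessing the side condition: I would verify that the variables introduced by outer $\lambda$'s in $M$ and $N$ are consistently handled so that the $\beta$-redex being contracted is indeed free of captured variables, which follows from the disciplined shape of encoded contexts described in Barbosa et al. Once this is spelled out, concluding $\models_\T \reify(M\simeq N)$ in all cases gives the lemma.
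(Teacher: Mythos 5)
Your proposal is correct and follows essentially the same route as the paper: induction on the derivation, verbatim reuse of the unchanged cases, and the three new cases (\Cong{} by congruence of $\simeq$ in a higher-order setting, \Bind{} with $B=\lambda$ by the ($\beta$) axiom plus the freshness side condition and $\alpha$-conversion, \Beta{} by ($\beta$) together with transitivity/congruence of equality). The paper's own treatment is considerably terser---each new case is dispatched in one or two lines---so your elaboration of the bookkeeping around $\reify$ and $=_{\alpha\beta}$ is a faithful expansion rather than a departure.
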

\begin{proof}
  The proof is by induction over the derivation $ M \simeq N$.  We only provide
  here the three new cases:\

  \smallskip

  \noindent
  \textsc{Case} \Bind{}$\; B=\lambda$: The induction hypothesis is
  $\models_\mathcal{T}$ $\reify(M[\lambda y.\>(\lambda x.\> s[x])\; y]
  \simeq N[\lambda y.\> t[y]])$.  Using ($\beta$) and the side condition of
  the rule, we can also deduce that $\models_\mathcal{T}$ $\reify(M[\lambda
  y.\>s[y])] \simeq N[\lambda y.\> t[y]])$. Hence by
  $\alpha$-conversion this is equivalent to ${\models_\mathcal{T}}\;\reify(M[\lambda
  x.\>s[x]] \simeq N[\lambda y.\> t[y]])$.

  \smallskip

  \noindent
  \textsc{Case} \Cong: This case follows directly from equality in a higher-order setting.

  \smallskip

  \noindent
  \textsc{Case} \Beta: This case follows directly from $(\beta)$ and equality in a
  higher-order setting.

  \smallskip

  \noindent
  The remaining cases are similar to Barbosa et al.
\end{proof}

The auxiliary functions $L(\Gamma)[t]$ and $R(\Gamma)[u]$ are used to
encode the judgment of the original inference system $\Gamma$ $\vartriangleright$ $t\; \simeq \; u$.
They are defined over the structure of the context, as follows:
\begin{align*}
  L(\varnothing)[t]\, &= \, \BOX{t}   &&R(\varnothing)[u]\, = \, \BOX{u} \\
  L(x,\,\Gamma)[t]\, &= \, \lambda x.\> L(\Gamma)[t]   &&R(x,\,\Gamma)[u]\, = \, \lambda.\> L(\Gamma)[u] \\
  L(\bar x_n \mapsto \bar s_n,\Gamma)[t]\, &= \, (\lambda \bar x_n.\> L(\Gamma)[t])\, \bar s_n   &&R(\bar x_n \mapsto \bar s_n,\Gamma)[u]\, = \, (\lambda \bar x_n.\> L(\Gamma)[u])\, \bar s_n
\end{align*}

\begin{lemma}
  If the judgment $\Gamma \vartriangleright t \simeq u$ is derivable using the original inference system, the equality
  $L(\Gamma)[t]$ $\simeq$ $R(\Gamma)[u]$ is derivable using the encoded inference system.
\end{lemma}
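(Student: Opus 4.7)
The plan is to proceed by structural induction on the derivation of $\Gamma \vartriangleright t \simeq u$, with case analysis on the last rule applied. For each original rule, I will exhibit a corresponding derivation in the encoded system whose conclusion is $L(\Gamma)[t] \simeq R(\Gamma)[u]$. The functions $L$ and $R$ commute with most contextual operations, which lets several cases go through mechanically, so the real work lies in three rules: \Cong, \Bind, and \Beta.

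Before the case analysis, I would establish two small auxiliary facts about the encoding. First, a compositionality property: $L(\Gamma, x \mapsto s)[t] =_{\alpha\beta} (\lambda x.\> L(\Gamma)[t])\, s$ and $L(\Gamma, y)[t] = \lambda y.\> L(\Gamma)[t]$, and similarly for $R$. Second, a substitution-invariance property: if $\substapp{\Gamma}{s} = s$, then $L(\Gamma)[\,\cdot\,]$ and $R(\Gamma)[\,\cdot\,]$ treat $s$ uniformly, which is precisely what we will need to match the side condition of the encoded \Beta{} rule. Both facts follow by a straightforward induction on $\Gamma$.

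With these in hand, the routine cases (Refl, Trans, Taut$_\T$, Sko, Let) reduce to applying the corresponding encoded rule after noting that $L$ and $R$ propagate through the premises. For \Cong{}, we observe that $L(\Gamma)[s\> t]$ has the shape $M[s\>t]$ where $M = L(\Gamma)[\,\cdot\,]$, and similarly $R(\Gamma)[s'\>t'] = N[s'\>t']$; the encoded \Cong{} rule then applies directly using the inductive hypotheses on $s \simeq s'$ and $t \simeq t'$. For \Bind{}, the inductive hypothesis yields $L(\Gamma, y, x \mapsto y)[s] \simeq R(\Gamma, y, x \mapsto y)[t]$, which by the compositionality fact rewrites to $L(\Gamma)[\lambda y.\>(\lambda x.\> s)\, y] \simeq R(\Gamma)[\lambda y.\> t]$; this matches exactly the premise of the encoded \Bind{} rule, whose conclusion is the desired $L(\Gamma)[Bx.\>s] \simeq R(\Gamma)[By.\>t]$.

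The main obstacle will be the \Beta{} case, where the side condition $\substapp{\Gamma}{s} = s$ must be translated into the encoded side condition $M[v] =_{\alpha\beta} N[s]$. Here the inductive hypothesis on $v \simeq s$ gives $L(\Gamma)[v] \simeq R(\Gamma)[s]$, and on $t \simeq u$ (under $\Gamma, x \mapsto s$) gives $(\lambda x.\> L(\Gamma)[t])\, s \simeq R(\Gamma, x \mapsto s)[u]$ after applying compositionality. Using the substitution-invariance fact together with the original side condition, one checks that $L(\Gamma)[v] =_{\alpha\beta} R(\Gamma)[s]$, which precisely licenses the encoded \Beta{} rule; its conclusion $L(\Gamma)[(\lambda x.\> t)\, v] \simeq R(\Gamma)[u]$ is what we need. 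The remaining cases are essentially identical to the ones treated by Barbosa et al., so the proof concludes by referring to that development for the unchanged rules.
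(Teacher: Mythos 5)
Your proof follows the same route as the paper's: induction on the derivation of $\Gamma \vartriangleright t \simeq u$, with the three new cases (\Bind{} for $B=\lambda$, \Cong, \Beta) discharged by showing that the encoding of each original inference instance is literally an instance of the corresponding encoded rule, and the remaining cases deferred to Barbosa et al.; the paper's own proof is just a terser version of this, spelling out only the \Bind{} case and declaring \Cong{} similar and \Beta{} ``similar to \Let{} with $n=1$.'' One small correction: your compositionality fact is oriented the wrong way---since $L$ unfolds the context from the oldest entry outward, the newest entry lands \emph{inside} the hole, i.e.\ $L(\Gamma,\, x \mapsto s)[t] = L(\Gamma)[(\lambda x.\> t)\, s]$ and $L(\Gamma,\, y)[t] = L(\Gamma)[\lambda y.\> t]$, not $(\lambda x.\> L(\Gamma)[t])\, s$ (the two are not even $\alpha\beta$-equivalent once $\Gamma$ binds variables occurring in $s$). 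Fortunately, in the \Bind{} and \Beta{} cases you silently switch to the correct inside-the-hole form, which is exactly what the encoded rules require, so the argument itself goes through.
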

\begin{proof} The proof is by induction over the derivation $\Gamma \vartriangleright t \simeq u$, we give only the three new cases:\

  \smallskip

  \noindent
  \textsc{Case} \Bind{}  with $B=\lambda$:
  The encoded antecedent is
  $M[\lambda y.\>(\lambda x.\;s)\;y]
  \simeq N[\lambda y.\;t]$
  (i.e.,
  $L({\Gamma,\,\allowbreak y,\,\allowbreak x\mapsto\nobreak y}){[s]} \simeq R({\Gamma,\,y,\,x\mapsto y}){[t]}$),
  and the encoded succedent is
  $M[\lambda x.\>s] \simeq N[\lambda y.\>t]$.
  By the induction hypothesis, the encoded antecedent is derivable. Thus, by
  the encoded \Bind{} rule, the encoded succedent is derivable.

  \smallskip

  \noindent
  \textsc{Case} \Cong: Similar to \Bind.

  \smallskip

  \noindent
  \textsc{Case} \Beta: Similar to \Let{} with $n = 1$.

  \smallskip

  \noindent
  The remaining cases are similar to Barbosa et al.
\end{proof}

\begin{lemma}[\textbf{Soundness of Inferences}]
  If the judgment $\Gamma \vartriangleright t \simeq u$ is derivable using the original inference system with the
  theories $\mathcal{T}_1 \ldots \mathcal{T}_n$, then $\models_\mathcal{T}$ $\Gamma (t)$ $\simeq$ $u$ with
  $\mathcal{T}$ = $\mathcal{T}_1 \cup \ldots \cup \mathcal{T}_n$ $\cup$ $\simeq$
  $\cup$ $\epsilon_1$ $\cup$ $\epsilon_2$ $\cup$ $\mathrm{let}$ $\cup$ $\beta$.
\end{lemma}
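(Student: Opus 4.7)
The plan is to chain the two preceding lemmas with an explicit computation of the resulting reification. First, I would apply the second lemma to the given derivation of $\Gamma \vartriangleright t \simeq u$ to obtain a derivation of the encoded equality $L(\Gamma)[t] \simeq R(\Gamma)[u]$ in the encoded inference system. Then I would apply the first lemma to this derivation to conclude that $\models_\mathcal{T} \reify(L(\Gamma)[t] \simeq R(\Gamma)[u])$.

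The remaining step is to show that this reified formula is precisely $\substapp{\Gamma}{t} \simeq u$ (modulo a harmless universal closure). For this, I would establish, by induction on the structure of $\Gamma$, the intermediate claim that $L(\Gamma)[t] \eqab \lambda \bar x_k.\> \BOX{\substapp{\Gamma}{t}}$ and $R(\Gamma)[u] \eqab \lambda \bar x_k.\> \BOX{u}$, where $\bar x_k$ enumerates in order of introduction the variables fixed by $\Gamma$. The empty-context case follows directly from the base cases of $L$ and $R$. In the variable-fixing case $(x,\Gamma')$, the definitions simply prepend a $\lambda x$-binder to each side of the inductive normal form. In the substitution case $(\bar x_n \mapsto \bar s_n, \Gamma')$, the definition of $L$ introduces a $\beta$-redex on the left whose reduction realises exactly the composition $\subst{\Gamma'} \circ \{\bar x_n \mapsto \bar s_n\}$ mandated by the definition of $\subst{-}$; on the right, the corresponding construct is $\beta$-equivalent to the inductive normal form in $u$, reflecting the invariant that the right-hand side of a judgment is already in reduced form and carries no pending substitutions.

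With this normal form in hand, $\reify(L(\Gamma)[t] \simeq R(\Gamma)[u])$ unfolds, by definition of $\reify$, to the universally closed equality $\forall \bar x_k.\> \substapp{\Gamma}{t} \simeq u$. Since the fixed variables appear free on both sides and semantic entailment of the universal closure is equivalent to entailment of the open formula, we conclude that $\models_\mathcal{T} \substapp{\Gamma}{t} \simeq u$, which is precisely the statement of the lemma. The cases for the rules inherited unchanged from the first-order calculus reduce to their counterparts in Barbosa et al., so no separate treatment is needed.

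The main obstacle will be the careful bookkeeping in the inductive step for substitution contexts: one must verify that $\beta$-reducing the encoded substitution faithfully implements capture-avoiding composition of substitutions, and that the freshness discipline on the encoding prevents the fixed variables $\bar x_k$ from being shadowed or captured by incoming substituends. The asymmetric treatment of substitutions in $L$ versus $R$---accumulated on the left, but transparent on the right---must also be reconciled with the conventions of the original rules, which consistently apply the ambient substitution only to the rewritten term and never to the already-normalized target.
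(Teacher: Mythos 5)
Your proposal is correct and follows essentially the same route as the paper, which simply chains the two preceding lemmas and defers the remaining bookkeeping to the soundness proof of the original calculus in Barbosa et al. The additional detail you supply---computing that $L(\Gamma)[t]$ and $R(\Gamma)[u]$ $\beta$-reduce to $\lambda \bar x_k.\>\BOX{\substapp{\GAMMA}{t}}$ and $\lambda \bar x_k.\>\BOX{u}$ so that the reification is the universal closure of $\substapp{\GAMMA}{t} \simeq u$---is exactly the argument the paper leaves implicit by citing the original proof.
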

\begin{proof} Using the above updated lemmas, the proof is identical to the one for the original calculus. 
\end{proof}

\section{Conclusion and Future Work}

We have presented a preliminary extension of the SMT-LIB syntax and of the
veriT proof format to support higher-order constructs in SMT problems and proofs.
Partial applications, $\lambda$-abstractions, and quantification over functional
variables can now be understood by a solver compliant with these languages.
The only relatively challenging element of these extensions so far concerns
the rules for representing detailed proofs of formula processing. The
next step is to extend the generic proof-producing formula processing
algorithm from Barbosa et al.~\cite{Barbosa2017-proofs}. Given the structural
similarity between the introduced extensions and the previous proof calculus,
we expect this to be straightforward.

A more interesting challenge will be to reconstruct these new proofs in proof
assistants, to allow full integration of a higher-order SMT solver. Since
detailed proofs are produced, with proof checking being guaranteed to have
reasonable complexity, we are confident to be able to produce effective
implementations.

With the foundations in place, the next step will be to implement the
automatic reasoning machinery for higher-order formulas and properly evaluating
its effectiveness.  Moreover, when providing support for techniques involving,
for example, inductive datatypes, we will need to augment the proof format
accordingly. 

\paragraph{Acknowledgment}

We would like to thank the anonymous reviewers for their
comments.  Between the initial version of this document and the current one, the
SMT-LIB extension has been greatly influenced by discussions with Clark Barrett
and Cesare Tinelli (the SMT-LIB managers, together with Pascal Fontaine) and
they should also be considered authors of this syntax extension.

\bibliographystyle{eptcs}
\bibliography{generic}
\end{document}